\newtheorem{proposition}{\em Proposition}
\newtheorem{theorem}{\em Theorem}
\newtheorem{conjecture}{\em Conjecture}
\newtheorem{definition}{\em Definition}
\newtheorem{question}{\em Question}
\journal{Sample Journal}
\begin{document}

\begin{frontmatter}

\title{On sublinear
approximations for the Petersen coloring conjecture}
%\tnotetext[label0]{This is only an example}

\author[label1,label2]{Davide Mattiolo}
\fntext[label2]{Partially supported by Fondazione Cariverona, program ``Ricerca Scientifica di Eccellenza 2018'', project ``Reducing complexity in algebra, logic, combinatorics - REDCOM''.}

\ead{davide.mattiolo@univr.it}

\author[label1]{Giuseppe Mazzuoccolo\corref{cor1}}
\address[label1]{Dipartimento di Informatica,
Universit\`a degli Studi di Verona, Strada le Grazie 15, 37134 Verona, Italy}

\cortext[cor1]{Corresponding author}
%\fntext[label3]{The author is supported by a grant \ldots}
%\fntext[label4]{Small city}

\ead{giuseppe.mazzuoccolo@univr.it}
%\ead[url]{author-one-homepage.com}

\author[label5]{Vahan Mkrtchyan}
\address[label5]{Gran Sasso Science Institute, L'Aquila, 67100, Italy}
\ead{vahan.mkrtchyan@gssi.it}

%\author[label1,label5]{Author Three}
%ead{author.three@mail.com}

\begin{abstract}
If $f:\mathbb{N}\rightarrow \mathbb{N}$ is a function, then let us say that $f$ is sublinear if \[\lim_{n\rightarrow +\infty}\frac{f(n)}{n}=0.\] If $G=(V,E)$ is a cubic graph and $c:E\rightarrow \{1,...,k\}$ is a proper $k$-edge-coloring of $G$, then an edge $e=uv$ of $G$ is poor (rich) in $c$, if the edges incident to $u$ and $v$ are colored with three (five) colors. An edge is abnormal if it is neither rich nor poor. The Petersen coloring conjecture of Jaeger states that any bridgeless cubic graph admits a proper 5-edge-coloring $c$, such that there is no an abnormal edge of $G$ with respect to $c$. For a proper 5-edge-coloring $c$ of $G$, let $N_G(c)$ be the set of abnormal edges of $G$ with respect to $c$. In this paper we show that (a) The Petersen coloring conjecture is equivalent to the statement that there is a sublinear function $f:\mathbb{N}\rightarrow \mathbb{N}$, such that all bridgeless cubic graphs admit a proper 5-edge-coloring $c$ with $|N_G(c)|\leq f(|V|)$; (b) for $k=2,3,4$, the statement that there is a sublinear function $f:\mathbb{N}\rightarrow \mathbb{N}$, such that all (cyclically) $k$-edge-connected cubic graphs admit a proper 5-edge-coloring $c$ with $|N_G(c)|\leq f(|V|)$ is equivalent to the statement that all (cyclically) $k$-edge-connected cubic graphs admit a proper 5-edge-coloring $c$ with $|N_G(c)|\leq 2k+1$.
\end{abstract}

\begin{keyword}
Cubic graph \sep Petersen coloring conjecture \sep normal edge-coloring \sep abnormal edge
\end{keyword}

\end{frontmatter}

\section{Introduction}
\label{sec:intro}

The Petersen Coloring Conjecture in graph theory asserts that the edges of every bridgeless cubic graph $G$ can be colored with edges of the Petersen graph $P$ as colors, such that adjacent edges of $G$ receive as colors adjacent edges of $P$. The conjecture is considered hard to prove as it implies some other classical conjectures in the field such as Cycle Double Cover Conjecture, Berge-Fulkerson Conjecture, The Shortest Cycle Cover Conjecture (see \cite{Celmins1984,Fulkerson,Jaeger1985,Preiss1981,Zhang1997} for more details). In \cite{Jaeger1985}, Jaeger himself introduced an equivalent formulation of the Petersen Coloring Conjecture. He proved that a bridgeless cubic graph satisfies this conjecture, if and only if, it admits a normal edge-coloring (see Definitions \ref{def:poorrich} and \ref{def:normal}) with at most $5$ colors. Let $\chi'_{N}(G)$ be the minimum number of colors in a normal edge-coloring of $G$. As usual, we refer to $\chi'_{N}(G)$ as the normal chromatic index of $G$. In terms of normal edge-colorings, Petersen Coloring Conjecture amounts to saying that all bridgeless cubic graphs have normal chromatic index at most five. The best known upper bound for $\chi'_{N}(G)$ in the class of all bridgeless cubic graphs is $7$ and some authors have asked for improving this bound to six as an intermediate step towards proving Jaeger's conjecture. If one considers the class of all simple cubic graphs (not necessarily bridgeless), then the situation has been clarified here only recently. There are examples of cubic graphs with normal chromatic index $7$ and it is shown in \cite{JGTpaperNormalColoring} that all simple cubic graphs admit a normal edge-coloring with at most seven colors. 

Now, let us introduce the main notions and definitions that will be used in the paper. Graphs considered in this paper are finite and undirected. They do not contain loops. However, they may contain parallel edges. If $G$ is a graph, then let $V(G)$ and $E(G)$ be the set of vertices and edges of $G$, respectively. For a vertex $v$ of $G$, let $\partial_{G}(v)$ be the set of edges of $G$ that are incident to the vertex $v$ in $G$.

Assume that $G$ and $H$ are two cubic graphs. A mapping $\phi:E(G)\rightarrow E(H)$, such that for each $v\in V(G)$ there is $w\in V(H)$ such that $\phi(\partial_{G}(v)) = \partial_{H}(w)$, is called an $H$-coloring of $G$. If $G$ has an $H$-coloring, then we write $H
\prec G$. It can be easily seen that $\prec$ is a transitive relation defined on the set of cubic graphs. That is, if $H\prec G$ and $K\prec H$, then $K\prec G$.

 %   \begin{figure}[ht]
%	\begin{center}
%	\begin{tikzpicture}[style=thick]
%\draw (18:2cm) -- (90:2cm) -- (162:2cm) -- (234:2cm) --
%(306:2cm) -- cycle;
%\draw (18:1cm) -- (162:1cm) -- (306:1cm) -- (90:1cm) --
%(234:1cm) -- cycle;
%\foreach \x in {18,90,162,234,306}{
%\draw (\x:1cm) -- (\x:2cm);
%\draw[fill=black] (\x:2cm) circle (2pt);
%\draw[fill=black] (\x:1cm) circle (2pt);
%}
%\end{tikzpicture}
%	\end{center}
%	\caption{The graph $P_{10}$.}\label{fig:Petersen10}
%\end{figure}

Let $P$ be the Petersen graph. The Petersen coloring conjecture of Jaeger states:
\begin{conjecture}\label{conj:P10conj} (Jaeger, 1988 \cite{Jaeger1988}) If $G$ is a bridgeless cubic graph, then $P \prec G$.
\end{conjecture}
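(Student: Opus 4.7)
The plan is to attack Conjecture~\ref{conj:P10conj} via Jaeger's reformulation stated in the introduction: it suffices to exhibit, for every bridgeless cubic graph $G$, a proper $5$-edge-coloring $c$ with $N_G(c)=\emptyset$, i.e.\ a normal $5$-edge-coloring. The $3$-edge-colorable case is immediate, since in any proper $3$-edge-coloring every edge is poor; the hard content lies in the class of snarks, and for these I would proceed through a minimal counterexample argument.

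Let $G^{*}$ be a counterexample of minimum order. Standard gluing along $2$- and $3$-edge-cuts, combined with the transitivity of $\prec$, would let one reduce step by step to the case in which $G^{*}$ is cyclically $4$-edge-connected and has girth at least $5$; in particular $G^{*}$ is a cyclically $4$-edge-connected snark. These reductions parallel those employed in attacks on the Cycle Double Cover Conjecture and the Berge--Fulkerson Conjecture.

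For the construction, I would fix a perfect matching $M$ of $G^{*}$, which exists by Petersen's theorem. The complement $G^{*}-M$ is a $2$-factor consisting of even cycles; partition its components into two classes and color the edges of the first class alternately with $\{1,2\}$, those of the second class alternately with $\{3,4\}$, and every edge of $M$ with color $5$. Each matching edge $uv\in M$ is then abnormal or not solely according to the pair of colors incident at $u$ and at $v$ drawn from $\{1,2,3,4\}$. The remaining task is to apply Kempe swaps inside single $\{i,j\}$-colored cycles and to flip the $\{1,2\}/\{3,4\}$ label of selected components of the $2$-factor so as to leave no matching edge abnormal.

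The principal obstacle is precisely the global coordination of these local moves: eliminating one abnormal edge typically creates others, and no monotone potential function is known to force termination. A fallback route is to invoke part~(a) of the main theorem of the present paper, which reduces the conjecture to producing, for each $G$, one proper $5$-edge-coloring with merely a sublinear number of abnormal edges; however, obtaining even such a sublinear bound is itself a wide-open problem, equivalent in difficulty to Conjecture~\ref{conj:P10conj}, which is why the statement has remained unresolved since 1988.
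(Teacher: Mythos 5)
This statement is Conjecture~\ref{conj:P10conj}, an open problem since 1988; the paper does not prove it and does not claim to, so there is no ``paper's own proof'' to compare against. Your text is likewise not a proof but a research programme, and you say so yourself. To be concrete about where it fails: the reduction of a minimal counterexample to a cyclically $4$-edge-connected snark of girth at least $5$ is asserted by analogy with other conjectures but not carried out (and for $P$-colorings the gluing along small cuts needs care, since one must match boundary color patterns, not just $3$-edge-colorability); and the core step --- coordinating Kempe swaps on the $\{1,2\}$- and $\{3,4\}$-cycles of the $2$-factor so that no edge of the perfect matching $M$ remains abnormal --- is precisely the unsolved content of the conjecture, as you concede when you note that no terminating potential function is known. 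Finally, the proposed fallback via Theorem~\ref{thm:bridgeless} is circular: that theorem establishes an \emph{equivalence} between the conjecture and the existence of a sublinear bound on abnormal edges, so invoking it presupposes exactly what you are trying to prove. The proposal should be presented as a discussion of possible attack routes, not as a proof of the statement.
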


Note that in \cite{Mkrt2013} it is shown that the Petersen graph is the only 2-edge-connected cubic graph that can color all bridgeless cubic graphs. %The conjecture is difficult to prove, since it can be seen that it implies the following two classical conjectures:
%\begin{conjecture} (Berge-Fulkerson, 1972 \cite{Fulkerson,Seymour}) Any bridgeless
%cubic graph $G$ contains six (not necessarily distinct) perfect matchings
%$F_1, \ldots , F_6$ such that any edge of $G$ belongs to exactly two of them.
%\end{conjecture}
%
%\begin{conjecture}
%((5, 2)-cycle-cover conjecture, \cite{Celmins1984,Preiss1981}) Any bridgeless
%graph $G$ (not necessarily cubic) contains five even subgraphs such that any
%edge of $G$ belongs to exactly
%two of them.
%\end{conjecture}

%%%%%%NEW%%%%%%%%%
For a positive integer $k$ a (proper) $k$-edge-coloring of a graph $G$ is an assignment of colors $\{1,...,k\}$ to edges of $G$, such that adjacent edges receive different colors. If $c$ is an edge-coloring of $G$, then let $S_{G,c}(v)$ be the set of colors that edges incident to the vertex $v$ receive. When it is clear which graph we are referring to, we write $S_{c}(v)$ in place of $S_{G,c}(v)$.

\begin{definition}\label{def:poorrich}
Let $uv$ be an edge of a cubic graph $G$ and $c$ is an edge-coloring of $G$, then the edge $uv$ is called {\bf poor} or {\bf rich} with respect to $c$, if $|S_{c}(u)\cup S_{c}(v)|=3$ or $|S_{c}(u)\cup S_{c}(v)|=5$, respectively. An edge that is neither poor nor rich is called {\bf abnormal}.
\end{definition} For a cubic graph $G$ and an edge-coloring $c$ of $G$, let $N_G(c)$ be the set of abnormal edges of $G$ with respect to $c$.

It can be easily seen that edge-colorings having only poor edges are $3$-edge-colorings of $G$. On the other hand, edge-colorings having only rich edges have been considered in the last years. They are called strong edge-colorings.
In this paper, we consider the case when all edges must be either poor or rich.

\begin{definition}\label{def:normal}
An edge-coloring $c$ of a cubic graph is {\bf normal}, if any edge is rich or poor with respect to $c$. 
\end{definition} 

It is immediate that an edge coloring which assigns a different color to every edge of a simple cubic graph is normal (as all edges are rich). Hence, one can define the normal chromatic index of a simple cubic graph $G$, as the smallest $k$, for which $G$ admits a normal $k$-edge-coloring. Let us denote it by $\chi'_{N}(G)$.

%%%%%%%%%%%NEW%%%%%%%%%%%%%

In \cite{Jaeger1985}, Jaeger proved that:

\begin{proposition}\label{prop:JaegerNormalColor}(Jaeger, \cite{Jaeger1985}) If $G$ is a cubic graph, then $P\prec G$, if and only if $G$ admits a normal $5$-edge-coloring.
\end{proposition}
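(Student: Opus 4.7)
The plan is to prove both implications using the standard description of $P$ as the Kneser graph $K(5,2)$: each vertex of $P$ is identified with a $2$-subset of $\{1,\dots,5\}$, and two such vertices are adjacent exactly when the corresponding $2$-subsets are disjoint. In this model one has a natural proper $5$-edge-coloring $c_P$ of $P$, assigning to the edge $\{a,b\}\{c,d\}$ the unique element of $\{1,\dots,5\}\setminus\{a,b,c,d\}$. A direct check shows that $c_P$ is proper and that the palette at vertex $\{a,b\}$ is $\{1,\dots,5\}\setminus\{a,b\}$; since for any adjacent pair $\{a,b\},\{c,d\}$ in $P$ the two $2$-subset labels are disjoint, the palettes' union is all of $\{1,\dots,5\}$, so every edge of $P$ is rich and $c_P$ is in particular a normal $5$-edge-coloring of $P$.

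For the forward direction, given a $P$-coloring $\phi:E(G)\to E(P)$, I would set $c:=c_P\circ\phi$. At each $v\in V(G)$, the local bijection $\phi(\partial_G(v))=\partial_P(w_v)$ forces $c$ to be proper at $v$ and yields $S_c(v)=S_{c_P}(w_v)$. For each edge $uv\in E(G)$, $\phi(uv)$ is an edge of $P$ joining $w_u$ and $w_v$, so $|S_c(u)\cup S_c(v)|=|S_{c_P}(w_u)\cup S_{c_P}(w_v)|=5$ by richness of $c_P$; hence $c$ is a normal $5$-edge-coloring of $G$.

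For the backward direction, given a normal $5$-edge-coloring $c$ of $G$ with colors in $\{1,\dots,5\}$, for each $v\in V(G)$ I would set $\bar v:=\{1,\dots,5\}\setminus S_c(v)$, a $2$-subset that I identify with a vertex of $P$. For every edge $e=uv\in E(G)$ with $c(e)=x$, define $\phi(e)$ to be the unique edge of $P$ incident to $\bar v$ of color $x$ under $c_P$, namely $\{\bar v,\,S_c(v)\setminus\{x\}\}$. Two checks close the argument. First, $\phi(e)$ does not depend on the choice of endpoint: if $e$ is rich, then $S_c(u)\cap S_c(v)=\{x\}$ yields $\bar v=S_c(u)\setminus\{x\}$ and $\bar u=S_c(v)\setminus\{x\}$, so $\phi(e)=\bar u\bar v$ regardless of which side we compute from; if $e$ is poor, then $\bar u=\bar v$ and $S_c(u)\setminus\{x\}=S_c(v)\setminus\{x\}$, so both viewpoints again agree. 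Second, since the palette at $\bar v$ under $c_P$ is exactly $S_c(v)$, the assignment $\phi$ sends $\partial_G(v)$ bijectively onto $\partial_P(\bar v)$, establishing the $P$-coloring property.

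The only subtle point is the poor case in the backward direction: one might fear that $\bar u=\bar v$ would force $\phi(uv)$ to be a self-loop in $P$, but $\phi(uv)$ is in fact the genuine edge $\{\bar v,\,S_c(v)\setminus\{x\}\}$ between two distinct $2$-subsets (since $\bar v\cap S_c(v)=\emptyset$). This uniform formula covers both rich and poor edges simultaneously and is what makes the correspondence go through.
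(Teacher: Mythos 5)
The paper does not prove this proposition; it is quoted from Jaeger's 1985 paper as a known result, so there is no internal proof to compare against. Your argument via the Kneser model $K(5,2)$ of $P$ and the canonical $5$-edge-coloring $c_P$ is the standard proof of this equivalence, and it is essentially correct: the backward direction (well-definedness of $\phi$ in both the rich and poor cases, and the local bijection $\phi(\partial_G(v))=\partial_P(\bar v)$ via the palette identity $S_{c_P}(\bar v)=S_c(v)$) is complete and careful. The one imprecision is in the forward direction: you assert that every edge $uv$ of $G$ becomes \emph{rich} under $c=c_P\circ\phi$, which tacitly assumes $w_u\neq w_v$. A $P$-coloring may send $\partial_G(u)$ and $\partial_G(v)$ onto the star of the \emph{same} vertex of $P$ (this happens, for instance, for the $P$-coloring of any $3$-edge-colorable cubic graph obtained by collapsing a $3$-edge-coloring onto a single star of $P$); in that case $S_c(u)=S_c(v)=S_{c_P}(w_u)$ and the edge $uv$ is poor, not rich. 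Since poor edges are also normal, the conclusion that $c$ is a normal $5$-edge-coloring survives, but you should split into the two cases $w_u=w_v$ (poor) and $w_u\neq w_v$ (rich, by richness of $c_P$) rather than claim richness outright.
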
 This means that Conjecture \ref{conj:P10conj} can be stated as follows:

\begin{conjecture}\label{conj:5NormalConj} For any bridgeless cubic graph $G$, $\chi'_{N}(G)\leq 5$.
\end{conjecture}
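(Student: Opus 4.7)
The plan is short: Conjecture~\ref{conj:5NormalConj} is simply Conjecture~\ref{conj:P10conj} restated in a different language, so the ``proof'' is just an unwinding of definitions. By the definition of $\chi'_N$, the inequality $\chi'_N(G)\leq 5$ holds exactly when $G$ admits a normal $5$-edge-coloring, and Proposition~\ref{prop:JaegerNormalColor} says that for any cubic graph this last property is equivalent to $P\prec G$. Hence the two conjectures are logically identical on the class of bridgeless cubic graphs and nothing more is required for the reformulation itself.

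Taken at face value as a hard mathematical claim, however, Conjecture~\ref{conj:5NormalConj} is the full Petersen Coloring Conjecture and is open. The strategy pursued in the rest of the paper is not to attack it directly but to reduce it to an \emph{a priori} weaker sublinear version: show that it suffices to produce, for each bridgeless cubic $G$, a proper $5$-edge-coloring $c$ with $|N_G(c)|\leq f(|V(G)|)$ for some fixed sublinear $f$. The nontrivial implication is that a sublinear bound actually forces $|N_G(c)|=0$ on every bridgeless cubic graph.

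The main obstacle is the contrapositive of this last implication. Given a hypothetical bridgeless cubic graph $G_0$ every proper $5$-edge-coloring of which contains at least one abnormal edge, one must amplify this single defect into a family of bridgeless cubic graphs whose abnormal-edge count grows linearly with the number of vertices. I would attempt this through a vertex-replacement (blow-up) construction, substituting copies of a suitable gadget built from $G_0$ at each vertex of a growing cubic host graph in such a way that bridgelessness is preserved and each copy is forced to contribute its abnormal edge to every global coloring. The delicate points are (i) ensuring that the substitution does not create a $5$-edge-coloring which bypasses the defect at the interface between host and gadget, and (ii) adapting the same construction to the cyclically $k$-edge-connected regime in order to obtain the sharper constant bound $2k+1$ announced in part~(b) of the abstract.
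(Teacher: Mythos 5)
Your first paragraph is exactly what the paper does: the statement is a conjecture, not a theorem, and the only ``proof'' the paper offers is the observation that, by the definition of $\chi'_N$ and Proposition~\ref{prop:JaegerNormalColor}, it is a verbatim restatement of Conjecture~\ref{conj:P10conj}. You correctly identify both this and the fact that the conjecture itself remains open; the speculative amplification strategy in your later paragraphs is not required for the statement (and, for the record, the paper's actual amplification in Theorem~\ref{thm:bridgeless} is simpler than a blow-up --- it just takes $t$ disjoint copies of $G$).
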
 Observe that Conjecture \ref{conj:5NormalConj} is true for $3$-edge-colorable cubic graphs. This is so because in any $3$-edge-coloring $c$ of a cubic graph $G$ any edge $e$ is poor, hence $c$ is a normal edge-coloring of $G$. This means that non-$3$-edge-colorable cubic graphs are the main obstacle for Conjecture \ref{conj:5NormalConj}. Let us note that Conjecture \ref{conj:5NormalConj} is verified for some non-$3$-edge-colorable bridgeless cubic graphs in \cite{HaggSteff2013}. 

In \cite{Samal2011}, the percentage of edges of a bridgeless cubic graph, which can be made poor or rich in a 5-edge-coloring, is investigated. There it is shown that in any bridgeless cubic graph $G$, there is a proper 5-edge-coloring such that at least $\frac{1}{3}\cdot |E(G)|$ of edges are normal. See the papers \cite{Bilkova15,PartiallyNormal,DAM,Pirot20}, where new results on this problem are presented. The problem studied in \cite{Samal2011} can be viewed as finding proper 5-edge-colorings of bridgeless cubic graphs with some upper bounds for the number of abnormal edges. Note that a similar approach was already considered by Kochol in \cite{Kochol02} for many different problems in graph theory.

The bounds for $\chi'_{N}(G)$ presented in the above-mentioned papers are linear in terms of the size of $G$. Thus, one may wonder whether it could be possible to show that all bridgeless cubic graphs $G$ admit a proper 5-edge-coloring with at most $f(|V(G)|)$ abnormal edges, where $f$ is a fixed sublinear function. 
In this paper, a function $f: \mathbb{N} \rightarrow \mathbb{N}$ on positive integers is called {\bf sublinear}, if 
\[\lim_{n\rightarrow +\infty}\frac{f(n)}{n}=0.\]
In the following section, we prove that obtaining such a result is going to be a difficult task. Here, we would like to offer a more general conjecture:

\begin{conjecture}
\label{conj:PetersenCCAbnormalEdges} The following statements are equivalent:
\begin{enumerate}
    \item [(a)] Petersen coloring conjecture is true;
    
    \item [(b)] There is a sublinear function $f:\mathbb{N}\rightarrow \mathbb{N}$, such that all bridgeless cubic graphs $G$ admit a proper 5-edge-coloring such that at most $f(|V(G)|)$ edges of $G$ are abnormal;
    
    \item [(c)] There is a sublinear function $g:\mathbb{N}\rightarrow \mathbb{N}$, such that all 2-connected cubic graphs $G$ admit a proper 5-edge-coloring such that at most $g(|V(G)|)$ edges of $G$ are abnormal;
    
    \item [(d)] There is a sublinear function $h:\mathbb{N}\rightarrow \mathbb{N}$, such that all 3-connected cubic graphs $G$ admit a proper 5-edge-coloring such that at most $h(|V(G)|)$ edges of $G$ are abnormal;
    
    \item [(e)] There is a sublinear function $k:\mathbb{N}\rightarrow \mathbb{N}$, such that all cyclically 4-edge-connected cubic graphs $G$ admit a proper 5-edge-coloring such that at most $k(|V(G)|)$ edges of $G$ are abnormal.
\end{enumerate}
\end{conjecture}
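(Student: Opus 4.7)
I would begin with the trivial direction. By Proposition \ref{prop:JaegerNormalColor}, (a) implies that every bridgeless cubic graph admits a normal $5$-edge-coloring, so the constant function $f\equiv 0$ witnesses all of (b), (c), (d), (e). Moreover, in a cubic graph a cut vertex forces a bridge among its incident edges, so being bridgeless is equivalent to being $2$-connected, giving (b)$\Leftrightarrow$(c). Likewise, any $2$-edge-cut of a cubic graph separates two cyclic parts, and edge- and vertex-connectivity coincide up to $k=3$ for cubic graphs, so cyclically $4$-edge-connected $\subseteq$ $3$-connected $\subseteq$ $2$-connected within the cubic class; hence the same sublinear witness transfers to smaller classes, and (b)$\Rightarrow$(c)$\Rightarrow$(d)$\Rightarrow$(e) is immediate.

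\textbf{Closing the cycle via a blow-up.} Since (a)$\Leftrightarrow$(b) is established in the paper, it remains to close the cycle, say by proving (e)$\Rightarrow$(a). My plan is a blow-up argument. Assume towards a contradiction that some bridgeless cubic graph $G$ has no normal $5$-edge-coloring, so every proper $5$-edge-coloring of $G$ has at least one abnormal edge. For each $n$, I would construct a cyclically $4$-edge-connected cubic graph $G^{(n)}$ containing $n$ vertex-disjoint subgraphs $G_1,\dots,G_n$, each isomorphic to $G$ up to a bounded local modification around a designated interface edge, with the property that every proper $5$-edge-coloring $c$ of $G^{(n)}$ restricts on each $G_i$ to a proper edge-coloring that extends to a proper $5$-edge-coloring $c_i$ of $G_i\cong G$ whose abnormal edges are internal to $G_i$ and remain abnormal in $G^{(n)}$. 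Since $G$ has no normal $5$-edge-coloring, each $c_i$ contributes at least one abnormal edge, yielding $|N_{G^{(n)}}(c)|\geq n-O(1)$ while $|V(G^{(n)})|=n\,|V(G)|+O(1)$, contradicting the sublinearity of the witness in~(e).

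\textbf{Main obstacle.} The entire difficulty is the construction of $G^{(n)}$. A naive dot-product or $2$-factor concatenation of copies of $G$, which should already suffice for (b)$\Rightarrow$(a), typically introduces $2$- or $3$-edge-cuts separating two cyclic portions of $G^{(n)}$ and thus destroys cyclic $4$-edge-connectivity. To stay inside the cyclically $4$-edge-connected class, one would have to interleave copies of $G$ with rigid gadgets that artificially raise the cyclic edge connectivity, or to embed the $G_i$'s as subdivisions in a pre-existing cyclically $4$-edge-connected host cubic graph, while still ensuring that any local normality constraint on $G_i$ inside $G^{(n)}$ faithfully encodes a proper $5$-edge-coloring of $G$. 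Simultaneously enforcing (i) cyclic $4$-edge-connectivity of $G^{(n)}$, (ii) rigidity of each copy with respect to abnormality, and (iii) bounded interface overhead, is where I expect the argument to be most delicate.

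\textbf{Intermediate route.} The paper's second result replaces each sublinear statement in (c), (d), (e) with the equivalent constant bound $|N_G(c)|\leq 2k+1$ for $k\in\{2,3,4\}$. Hence an equivalent formulation of (e)$\Rightarrow$(a) is: if every cyclically $4$-edge-connected cubic graph admits a proper $5$-edge-coloring with at most $9$ abnormal edges, then Jaeger's conjecture holds for all bridgeless cubic graphs. A plausible attack is to decompose a bridgeless cubic graph into its cyclically $4$-edge-connected atoms along $2$- and $3$-edge-cuts using standard cubic-graph surgery, and then to lift a bounded-budget proper $5$-edge-coloring of each atom to a normal $5$-edge-coloring of the whole graph. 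Since this lifting step is essentially as strong as Jaeger's conjecture itself, any genuine progress on Conjecture~\ref{conj:PetersenCCAbnormalEdges} in its most restrictive form is expected to coincide with progress on the Petersen coloring conjecture.
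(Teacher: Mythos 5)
There is a genuine gap here, and it is worth being precise about what it is. The statement you are proving is stated in the paper as a \emph{conjecture}, and the paper does not prove it: it only establishes the equivalence (a)$\Leftrightarrow$(b) (Theorem~\ref{thm:bridgeless}) and then shows that each of the sublinear statements (c), (d), (e) is equivalent to a \emph{constant}-bound statement ($\leq 5$, $\leq 7$, $\leq 9$ abnormal edges, Theorems~\ref{thm:2edgeconnected}--\ref{thm:cyc4edgeconnected}); it never derives (a) from any of (c), (d), (e). Your proposal reproduces exactly this situation: the chain (a)$\Rightarrow$(b)$\Rightarrow$(c)$\Rightarrow$(d)$\Rightarrow$(e) is fine (restriction to a subclass preserves the witness, and the class inclusions hold for cubic graphs), but the step that would close the cycle, (e)$\Rightarrow$(a), is only a plan. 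You yourself name the construction of $G^{(n)}$ as ``the entire difficulty'' and do not supply it; the requirements you list --- cyclic $4$-edge-connectivity of the host, rigidity of each copy so that its abnormal edges cannot be pushed into the interface, and the guarantee that the restriction of $c$ to each copy genuinely extends to a proper $5$-edge-coloring of $G$ whose abnormality certificate transfers back --- are precisely the obstructions that the paper's authors could not overcome either, which is why the paper's counting arguments only recover a coloring of $G$ with a bounded number of abnormal edges near the surgery sites rather than a normal coloring. So the proposal does not prove the statement; it proves the easy half and restates the hard half.

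One smaller but real error: your claim that (b)$\Leftrightarrow$(c) because ``being bridgeless is equivalent to being $2$-connected'' is not correct as a direct argument. Bridgeless cubic graphs may be disconnected (the paper explicitly notes this before Theorem~\ref{thm:bridgeless}), and applying a sublinear bound $g$ to each of $m$ components of bounded size yields $m\cdot g(O(1))$ abnormal edges in total, which is linear in $|V(G)|$, not sublinear. This is not a pedantic point: the disjoint-union trick is exactly the mechanism by which the paper's proof of (b)$\Rightarrow$(a) works, i.e., sublinearity over a disconnected blow-up forces one copy to have zero abnormal edges. The implication (c)$\Rightarrow$(b) therefore only follows by going all the way around the cycle through (a), which brings you back to the missing step above.
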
 Roughly speaking, our conjecture suggests that beating the linear upper bound for abnormal edges amounts to proving the original conjecture of Jaeger. Note that \cite{Kochol02} proves this statement for many well-known conjectures in graph theory. Table \ref{tab:main} summarizes the results that we have obtained so far. On the left column we give the class of cubic graphs where we assumed the sublinear bound. In the right column we give the constant upper bound that this sublinear bound implies in the class under consideration.

\begin{table}[]
    \centering
    \begin{tabular}{ |c|c| }
 \hline
 The class of cubic graphs & The constant upper bound for abnormal edges \\ 
 \hline
 bridgeless & $= 0$ \\ 
 \hline
 2-edge-connected & $\leq 5$ \\ 
 \hline
 3-edge-connected & $\leq 7$ \\ 
 \hline
 cyclically 4-edge-connected & $\leq 9$ \\ 
 \hline
\end{tabular}
    \caption{The classes of cubic graphs and the upper bounds for abnormal edges that the sublinear bound implies.}\label{tab:main}
\end{table}

If on one hand, upper bounds on the number of abnormal edges can be investigated, on the other hand, one can also try to prove that graphs having a proper $5$-edge-coloring with a small prescribed number of abnormal edges admit a normal $5$-edge-coloring. In this spirit, we present some results and questions in Section~\ref{sec:future_work}. In particular we prove that there is no proper $5$-edge-coloring of a cubic graph having only one abnormal edge.

\section{The main results}

%REMARK: RECALL TO WRITE SOMEWHERE THAT BRIDGELESS CUBIC COULD BE NOT CONNECTED
%For a $5$-edge-coloring $c$ of a cubic graph $G$, let $N_G(c)$ be the set of abnormal edges of $G$ with respect to the coloring $c$.
In this section we obtain our main results. Our first theorem says that showing a sublinear bound for the number of abnormal edges with respect to the order of a bridgeless cubic graph $G$ is as hard as proving Petersen coloring conjecture. In other words, we establish the equivalence of statements (a) and (b) in Conjecture \ref{conj:PetersenCCAbnormalEdges}. We observe that in this statement $G$ may not be connected.

\begin{theorem}
\label{thm:bridgeless} The following statements are equivalent:
\begin{enumerate}
    \item [(a)] Conjecture \ref{conj:P10conj} holds true;
    \item [(b)] There exists a sublinear function $f$, such that every bridgeless cubic graph $G$ admits a proper 5-edge-coloring $c$ with $|N_G(c)|\leq f(|V(G)|)$.
\end{enumerate}
\end{theorem}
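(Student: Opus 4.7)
The implication (a) $\Rightarrow$ (b) is essentially free: if Conjecture~\ref{conj:P10conj} holds, then by Proposition~\ref{prop:JaegerNormalColor} every bridgeless cubic graph admits a normal $5$-edge-coloring, i.e.\ a proper $5$-edge-coloring $c$ with $|N_G(c)|=0$. Thus one may take $f\equiv 0$, which is trivially sublinear.

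For the interesting direction (b) $\Rightarrow$ (a), my plan is to argue by contradiction using a disjoint-union amplification. Suppose (b) holds with witness $f$, but Conjecture~\ref{conj:P10conj} fails. By Proposition~\ref{prop:JaegerNormalColor}, there is a bridgeless cubic graph $G_0$ with no normal $5$-edge-coloring; equivalently, every proper $5$-edge-coloring of $G_0$ leaves at least one abnormal edge.

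For each positive integer $k$, let $G_k$ be the disjoint union of $k$ copies of $G_0$. Then $G_k$ is cubic and bridgeless (note that the theorem explicitly does not require connectedness), and $|V(G_k)|=k\cdot|V(G_0)|$. Any proper $5$-edge-coloring of $G_k$ restricts to a proper $5$-edge-coloring on each copy of $G_0$, and whether an edge is abnormal is determined entirely by the colors appearing at its two endpoints, which lie in a single copy. Hence each of the $k$ copies contributes at least one abnormal edge, so
\[
|N_{G_k}(c)|\geq k \quad \text{for every proper 5-edge-coloring } c \text{ of } G_k.
\]
On the other hand, applying hypothesis (b) to the bridgeless cubic graph $G_k$ yields a proper $5$-edge-coloring $c_k$ with $|N_{G_k}(c_k)|\leq f(k\cdot|V(G_0)|)$. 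Combining,
\[
k \;\leq\; f\bigl(k\cdot|V(G_0)|\bigr),
\qquad\text{hence}\qquad
\frac{1}{|V(G_0)|}\;\leq\;\frac{f\bigl(k\cdot|V(G_0)|\bigr)}{k\cdot|V(G_0)|}.
\]
Letting $k\to\infty$, sublinearity of $f$ forces the right-hand side to $0$, while the left-hand side is a fixed positive constant, a contradiction. Therefore no such $G_0$ exists and Conjecture~\ref{conj:P10conj} holds.

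There is no real obstacle here: the key point is just that abnormality is a local property (depending only on the endpoints of the edge), so it distributes additively over connected components, which lets the disjoint-union construction blow up the ``forced'' number of abnormal edges linearly while the order of the ambient graph also grows only linearly, defeating any sublinear upper bound. The subtler variants in Table~\ref{tab:main}, where higher connectivity is required, will not permit this trivial disjoint-union trick and will need a more delicate gluing construction; that is where the real work lies, but for the bridgeless case the argument above suffices.
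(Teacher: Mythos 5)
Your proposal is correct and uses essentially the same argument as the paper: the (a)$\Rightarrow$(b) direction via Proposition~\ref{prop:JaegerNormalColor} with $f\equiv 0$, and the (b)$\Rightarrow$(a) direction via a disjoint union of many copies of a single bridgeless cubic graph, exploiting that abnormality is local to a component and that sublinearity of $f$ forces some copy to receive no abnormal edges. The only difference is presentational (you argue by contradiction from a hypothetical counterexample $G_0$, while the paper argues directly for an arbitrary $G$), which does not change the substance.
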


\begin{proof} (a) implies (b): it follows by Proposition \ref{prop:JaegerNormalColor} since the identically zero function is sublinear.

(b) implies (a):  it is known that it suffices to prove the Petersen coloring conjecture for 2-connected cubic graphs. Let $G$ be such a graph. Consider a cubic graph $H$ obtained from $t$ disjoint copies of $G$. Here $t\geq 1$. We have $|V(H)|=t\cdot |V(G)|$. Since $f$ is sublinear, we can choose $t$ large enough in such a way that
\[\frac{f(|V(H)|)}{|V(H)|}<\frac{1}{|V(G)|}.\]
By (b), we can assume that $H$ admits a proper 5-edge-coloring $c$, such that $|N_H(c)|\leq f(|V(H)|)$. Hence
\[|N_H(c)|\leq f(|V(H)|)<\frac{|V(H)|}{|V(G)|}=t.\]
Thus, there is a copy of $G$ where there are no abnormal edges with respect to $c$. Thus, in this copy $c$ gives a normal 5-edge-coloring of $G$. The proof is complete.
\end{proof}

The proof of previous theorem could appear unsatisfactory since it uses a disconnected graph. Mainly for this reason, we wonder if we can obtain similar conditions under some connectivity assumptions. The answer is positive as we show in the next three theorems.

\begin{theorem}
\label{thm:2edgeconnected} The following statements are equivalent:
\begin{enumerate}
    \item [(a)] Any 2-connected cubic graph $G$ admits a proper 5-edge-coloring $c$, such that $|N_G(c)|\leq 5$.
    \item [(b)] There exists a sublinear function $f$, such that every 2-connected cubic graph $G$ admits a proper 5-edge-coloring $c$ with $|N_G(c)|\leq f(|V(G)|)$.
\end{enumerate}
\end{theorem}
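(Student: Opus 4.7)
The direction (a) $\Rightarrow$ (b) is immediate by taking $f$ to be the constant function $5$, which is trivially sublinear. For the converse, the plan is a gluing construction. Fix an edge $uv$ of the 2-connected cubic graph $G$. For each integer $t\geq 2$, build a cubic graph $H_t$ by taking $t$ disjoint copies $G_1,\ldots,G_t$ of $G$, with $u_i,v_i$ denoting the copies of $u,v$ in $G_i$, deleting the edge $u_iv_i$ from each $G_i$, and adding the \emph{connector} edges $v_iu_{i+1}$ for $i=1,\ldots,t$ (indices modulo $t$). A short case analysis shows $H_t$ is 2-connected: removing an internal vertex $x\in V(G_j)$ different from $u_j,v_j$ leaves $G_j-x$ connected (by the 2-connectivity of $G$), and the connectors glue all copies together; removing $u_j$ (symmetrically $v_j$) kills one connector of $G_j$, but $G_j-u_j$ is still connected and the other copies together with the surviving connectors provide a cyclic detour around it.

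Apply (b) to $H_t$: there is a proper 5-edge-coloring $c$ of $H_t$ with $|N_{H_t}(c)|\leq f(t|V(G)|)$. Since $f$ is sublinear, for $t$ large enough $f(t|V(G)|)<t$. Every edge of $H_t$ is either a connector edge or lies internally in some copy $G_i-u_iv_i$; the internal abnormal edges of $H_t$ are partitioned among the $t$ copies, so by pigeonhole there exists an index $j$ such that $G_j-u_jv_j$ contains no abnormal edge of $(H_t,c)$.

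Now extend the restriction of $c$ to $G_j-u_jv_j$ to a proper 5-edge-coloring $c'$ of $G_j$ by assigning to $u_jv_j$ any color differing from those of the (at most four) edges of $G_j-u_jv_j$ adjacent to it; such a color exists among the five available. For any edge $e$ of $G_j$ not incident to $u_j$ or $v_j$, both endpoints of $e$ have exactly the same color sets under $c'$ in $G_j$ as under $c$ in $H_t$, so $e$ keeps its poor/rich/abnormal status, which is non-abnormal by the choice of $j$. Hence the only edges that can be abnormal in $(G_j,c')$ are the two edges of $G_j$ at $u_j$ other than $u_jv_j$, the two at $v_j$ other than $u_jv_j$, and $u_jv_j$ itself, for at most five abnormal edges in total. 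Since $G_j$ is a copy of $G$, this gives $|N_G(c')|\leq 5$. The main obstacle is the careful verification that $H_t$ is 2-connected (and correctly cubic in the presence of possible parallel edges around $uv$); once this is settled, the pigeonhole step and the local counting around $u_j$ and $v_j$ are routine.
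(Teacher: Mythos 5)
Your proof is correct and follows essentially the same route as the paper's: delete an edge $e=uv$, join $t$ copies of $G-e$ cyclically into a 2-connected cubic graph $H_t$, use sublinearity and pigeonhole to find an abnormal-free copy, and reinsert $e$ with one of the five colors (at most four being forbidden), leaving at most the five edges incident to $u$ or $v$ possibly abnormal. The only difference is that you make explicit the 2-connectivity check for $H_t$, which the paper leaves implicit.
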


\begin{proof}
(a) implies (b): as before, by Proposition \ref{prop:JaegerNormalColor} and since the identically five function is sublinear.

(b) implies (a): Let $G$ be a 2-connected cubic graph and let $e$ be an edge of $G$. Consider a cubic graph $H$ obtained from $t$ disjoint copies of $G-e$ by joining them cyclically. Here, $t\geq 1$. We have $|V(H)|=t\cdot |V(G)|$. Since $f$ is sublinear, we can choose $t$ such that
\[\frac{f(|V(H)|)}{|V(H)|}<\frac{1}{|V(G)|}.\]
By (b), we can assume that $H$ admits a proper 5-edge-coloring $c$, such that $|N_H(c)|\leq f(|V(H)|)$. Hence
\[|N_H(c)|\leq f(|V(H)|)<\frac{|V(H)|}{|V(G)|}=t.\]
Thus, there is a copy of $G-e$ where there are no abnormal edges with respect to $c$. Since the endpoints of $e$ are adjacent to four edges, they see four colors. Hence there is a color in $\{1,...,5\}$ that is missing in both ends of $e$. We can color $e$ with this color. Clearly, this will give a proper 5-edge-coloring of $G$. Moreover, only $e$ and the four edges adjacent to it might be abnormal. Thus (a) holds. The proof is complete.
\end{proof}

Now we show that we need to increase the number of abnormal edges under stronger connectivity assumptions. 

\begin{theorem}
\label{thm:3edgeconnected} The following statements are equivalent:
\begin{enumerate}
    \item [(a)] Any 3-connected cubic graph $G$ admits a proper 5-edge-coloring $c$, such that $|N_G(c)|\leq 7$.
    \item [(b)] There exists a sublinear function $f$, such that every 3-connected cubic graph $G$ admits a proper 5-edge-coloring $c$ with $|N_G(c)|\leq f(|V(G)|)$.
\end{enumerate}
\end{theorem}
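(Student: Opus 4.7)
Direction $(a) \Rightarrow (b)$ is immediate since the constant function $f \equiv 7$ is sublinear.

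For $(b) \Rightarrow (a)$, the plan is to adapt the cyclic-gluing construction of Theorem~\ref{thm:2edgeconnected} by deleting a \emph{vertex} rather than an edge; the price of the stronger connectivity required of the glued graph $H$ is a larger neighborhood of change, which produces the constant $7$ in place of $5$. Fix a $3$-connected cubic graph $G$ and a vertex $v\in V(G)$ with neighbors $a,b,c$. For an integer $t\geq 3$ (to be chosen later), take $t$ disjoint copies of $G-v$, writing $a_i,b_i,c_i$ for the dangling vertices of the $i$-th copy, and introduce $t$ new ``junction'' vertices $w_1,\ldots,w_t$; for each $i$ (indices modulo $t$) add the three edges $w_i a_i$, $w_i b_{i-1}$, $w_i c_{i-2}$. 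The resulting graph $H$ is cubic on $t\,|V(G)|$ vertices. Each copy of $G-v$ is attached to three \emph{distinct} junctions via the cyclically shifted offsets $0,1,2$, and combined with the fact that $G-v$ is $2$-connected (because $G$ is $3$-connected), a routine case split shows that $H$ is $3$-connected for every $t\geq 3$. I expect this connectivity verification to be the most tedious step of the proof, although elementary.

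Now choose $t$ so large that $f(t\,|V(G)|)<t$, which is possible by the sublinearity of $f$, and apply $(b)$ to obtain a proper $5$-edge-coloring $c$ of $H$ with $|N_H(c)|<t$. By pigeonhole some copy of $G-v$, say the $i$-th, contains no abnormal edge of $H$. Define
\[\mu_a = c(a_i w_i),\qquad \mu_b = c(b_i w_{i+1}),\qquad \mu_c = c(c_i w_{i+2}).\]
We extend the restriction of $c$ to this copy to a proper $5$-edge-coloring of $G$ by choosing colors $\alpha,\beta,\gamma$ for the new edges $va_i$, $vb_i$, $vc_i$, attempting to reproduce $\mu_a,\mu_b,\mu_c$ whenever possible; each $\mu$ is automatically a legal color at its own dangling vertex, since it is already the color of an edge incident there in $H$. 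If $\mu_a,\mu_b,\mu_c$ are pairwise distinct, we take $(\alpha,\beta,\gamma)=(\mu_a,\mu_b,\mu_c)$: the palettes at $a_i,b_i,c_i$ in $G$ coincide with those in $H$, so no edge of $G-v$ switches status and only the three new edges at $v$ can be abnormal. If exactly two of the $\mu$'s coincide, we keep the two unequal values on their ends and pick the third color from a palette of size at least $5-4=1$; only the two edges in $G-v$ incident with the single adjusted dangling vertex may flip status, for a total of at most $5$ abnormal edges. If $\mu_a=\mu_b=\mu_c$, we fix $\alpha=\mu_a$ and pick distinct $\beta,\gamma$ from the legal palettes at $b_i,c_i$ (each of size at least $2$ once $\alpha$ is excluded) via an elementary Hall-type check; the four edges at $b_i,c_i$ inside $G-v$ may then flip, giving at most $2+2+3=7$ abnormal edges.

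In every case $G$ admits a proper $5$-edge-coloring with at most $7$ abnormal edges, establishing $(a)$. The main obstacle is the $3$-connectivity argument for $H$; once this is resolved, the coloring bookkeeping reduces to a routine case split on the multiplicities in the triple $(\mu_a,\mu_b,\mu_c)$.
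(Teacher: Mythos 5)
Your proposal is correct and follows essentially the same strategy as the paper's proof: amplify $G$ by gluing many copies of $G-v$ into a larger $3$-connected cubic graph $H$, use sublinearity and pigeonhole to find an abnormal-edge-free copy, and then restore $v$ with a local recoloring that can disturb at most the three new edges plus the four edges at two of $v$'s neighbours, giving the bound $7$. The only differences are cosmetic: the paper builds $H$ by substituting copies of $G-v$ into a $3$-connected bipartite cubic graph rather than via your explicit junction vertices $w_i$, and both arguments leave the (routine but tedious) $3$-connectivity check of $H$ unverified.
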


\begin{proof} (a) implies (b): it follows as before by using Proposition \ref{prop:JaegerNormalColor} and the identically seven function is sublinear.

(b) implies (a): let $G$ be any 3-connected cubic graph and let $v$ any of its vertices. For a positive integer $t\geq 1$ take a 3-connected bipartite cubic graph $B$ such that $B$ has $2t$ vertices. Consider a cubic graph $H$ obtained from $B$ by replacing each of its vertices with a copy of $G-v$. Clearly, $H$ is 3-connected and $|V(H)|=2t\cdot (|V(G)|-1)$. Since $f$ is sublinear, we can choose $t$ such that
\[\frac{f(|V(H)|)}{|V(H)|}<\frac{1}{2(|V(G)|-1)}.\]
By (b), there is a proper 5-edge-coloring $c$ of $H$, such that $|N_H(c)|\leq f(|V(H)|)$. We have
\[|N_H(c)|\leq f(|V(H)|)<\frac{|V(H)|}{2(|V(G)|-1)}=t.\]
Thus, since the number of copies of $G-v$ is $t$, there is a copy, say $K$, of $G-v$ which contains no abnormal edges with respect to $c$. 
We construct a copy $G'$ of $G$ by contracting all vertices of $H$ not in $K$ in a unique vertex. We denote again, with a slight abuse of notation, by $v$ such a vertex. 
Let $vv_1$, $vv_2$ and $vv_3$ be the three edges adjacent to $v$ in $G'$. 
We color the edge-set of $G'$ by assign to every edge different from $vv_1$, $vv_2$ and $vv_3$ the same color that it has in $c$. We extend the coloring to the remaining three edges of $G'$ as follows.
We choose the color of $vv_1$ equal to the color of the unique edge of $H$ incident to $v_1$ and not in $K$. Then we can choose the color of $vv_2$ different from the color of $vv_1$ and from the color of the other two edges of $K$ incident to $v_2$. Finally, we can choose the color of $vv_3$ different from the colors of $vv_1$ and $vv_2$ and from the color of the other two edges of $K$ incident to $v_3$. Clearly, this is possible since five colors are avaiable and at most four of them are forbidden in each step of the process. The resulting coloring will be a proper 5-edge-coloring. Moreover, only the edges $vv_1, vv_2, vv_3$ and the four edges adjacent to $vv_2$ and $vv_3$ might be abnormal. Thus, in this coloring there are at most seven abnormal edges. The proof is complete.
\end{proof}

Let $G$ be a bridgeless cubic graph, and let $e_1=ab$ and $e_2=cd$ be two independent edges of it. Take $t$ copies of $G-e_1-e_2$ and let the vertices corresponding to $a, b, c, d$ in the $i$th copy of $G-e_1-e_2$ be $a_i, b_i, c_i, d_i$. Now, join them in the following way: for $i=1,...,(t-1)$ add the edges $d_ia_{i+1}$ and $c_ib_{i+1}$. When $i=t$, we add the edges $d_ta_{1}$ and $c_tb_{1}$ (Figure \ref{fig:JoiningCyclically}). Observe that the resulting graph is cubic. We will say that it is obtained from $t$ copies of $G-e_1-e_2$ by joining them cyclically.

\begin{figure}[ht]
\centering

\includegraphics[scale=0.4]{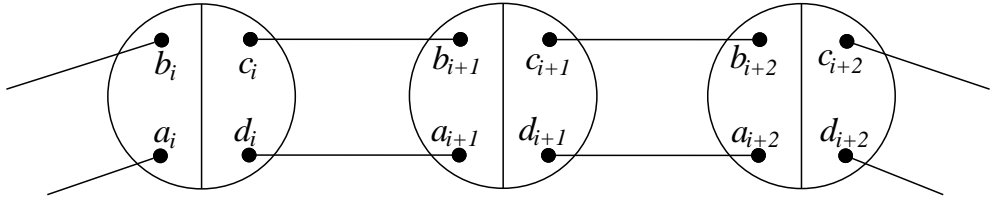}

	\caption{Joining the copies of $G-e_1-e_2$ cyclically.}
	\label{fig:JoiningCyclically}
\end{figure}

We prefer to omit the proof of the following technical proposition. It is nothing but a case by case analysis. 
\begin{proposition}
\label{prop:tcyc4edgeconnected} Let $G$ be a cyclically 4-edge-connected cubic graph and let $e_1$, $e_2$ be two independent edges of $G$. Take $t\geq 2$ copies of $G-e_1-e_2$ and join them cyclically to get a cubic graph $H$. Then $H$ is cyclically 4-edge-connected.
\end{proposition}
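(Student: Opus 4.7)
The plan is to argue by contradiction: assume there is an edge-cut $F$ of $H$ with $|F| \leq 3$ and vertex bipartition $(A,B)$ such that both $A$ and $B$ induce subgraphs containing a cycle. Decompose $F = J \cup \bigcup_{i=1}^{t} F_i$, where $F_i = F \cap E(G_i)$ is the portion of $F$ inside the $i$-th copy of $G - e_1 - e_2$, and $J$ is the set of joining edges in $F$.

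First I would handle the case in which no $G_i$ is split by $F$, so $F = J$. Then $F$ descends to an edge-cut of the multigraph obtained from $H$ by contracting each $G_i$ to a single super-vertex; this multigraph is the cycle $C_t$ with each edge duplicated, so every one of its edge-cuts has size at least $4$, contradicting $|F| \leq 3$. Next, I would analyze the structure on split copies: if $G_i$ is split, then adding back $e_1$ and $e_2$ recovers a cubic graph isomorphic to $G$, and the bipartition $(A \cap V(G_i), B \cap V(G_i))$ is an edge-cut of $G$ of size $|F_i| + \epsilon_1^{(i)} + \epsilon_2^{(i)}$, where $\epsilon_j^{(i)} \in \{0,1\}$ indicates whether $e_j$ crosses the bipartition. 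A routine degree count in cubic graphs shows that every edge-cut of $G$ of size at most $3$ must be trivial (isolating a single vertex), and that every non-cyclic edge-cut of size $4$ or $5$ isolates an edge or a path on three vertices, respectively; combined with $|F_i| + \epsilon_1^{(i)} + \epsilon_2^{(i)} \leq 5$, this pins down the possible local shapes of the cut on each split copy.

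Because $|F_i| \geq 1$ for every split copy and $\sum_i |F_i| \leq 3$, there are at most three split copies. I would then perform a case analysis on the number of split copies ($1$, $2$, or $3$) and on the local shape of the cut on each one. The key counting ingredient is that in the super-multigraph every transition between two consecutive whole copies lying on opposite sides contributes exactly two edges to $J$, whereas a split copy $G_i$ contributes joining edges to $J$ according to the side assignments of its boundary vertices $a_i, b_i, c_i, d_i$. The local shape constraints imposed by cyclic $4$-edge-connectivity of $G$ then force, in every sub-case, either $|F| \geq 4$ or that one of $A, B$ is acyclic, contradicting the assumption that the cut is cyclic.

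The main obstacle will be the bookkeeping when two or three copies are split and the four boundary vertices in each split copy are distributed non-trivially across the two sides; one must enumerate the local patterns up to the obvious symmetries and check compatibility with the side assignments of the unsplit copies around the super-cycle $C_t$. This mechanical but somewhat lengthy verification is presumably the reason the authors omit the detailed proof.
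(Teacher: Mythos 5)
Your setup is sound and, as far as one can tell, coincides with what the authors have in mind: the paper explicitly omits this proof, describing it only as ``a case by case analysis,'' and your decomposition $F=J\cup\bigcup_i F_i$, the treatment of the no-split-copy case via the doubled cycle $C_t$ (any cut there has size at least $4$), the observation that a split copy forces $|F_i|\geq 1$ so at most three copies are split, and the degree-count classification of small non-cyclic edge-cuts in a cubic graph (a non-cyclic $k$-cut has an acyclic side on at most $k-2$ vertices, whence $3$-cuts are trivial, non-cyclic $4$-cuts isolate an edge, and non-cyclic $5$-cuts isolate a $3$-vertex path) are all correct and are exactly the right preliminary reductions.

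The genuine gap is that the proof stops where the work begins. The entire substance of the proposition is the case analysis over the number of split copies and the placement of the boundary vertices $a_i,b_i,c_i,d_i$ relative to $(A,B)$, and you announce it (``I would then perform a case analysis\ldots'') and assert its outcome (``the local shape constraints\ldots force, in every sub-case, either $|F|\geq 4$ or that one of $A,B$ is acyclic'') without verifying a single case. As written this is a proof plan, not a proof; the claim that every sub-case closes is precisely what needs to be demonstrated. Two further points deserve explicit care when you do carry it out: (i) a cycle of $H$ contained in $A$ may traverse joining edges and pass through several copies, so you cannot assume that some $A_i$ contains a cycle of $G$ and must instead rely throughout on the acyclic-side degree count, as your classification implicitly does; and (ii) the step $|F_i|\geq 1$ for a split copy uses that $G-e_1-e_2$ is connected, which follows from the $3$-edge-connectivity of $G$ (itself a consequence of cyclic $4$-edge-connectivity via your own counting) and should be stated. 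Note also that the theorem that invokes this proposition chooses $e_1,e_2$ as the end-edges of a path of length three; your argument should not need that extra hypothesis, but you should confirm your cases go through for arbitrary independent $e_1,e_2$ as the proposition claims.
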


Now, by taking a path of length three, such that its end-edges are $e_1$ and $e_2$, we can prove the following

\begin{theorem}
\label{thm:cyc4edgeconnected} The following statements are equivalent:
\begin{enumerate}
    \item [(a)] Any cyclically 4-edge-connected cubic graph $G$ admits a proper 5-edge-coloring $c$, such that $|N_G(c)|\leq 9$.
    \item [(b)] There exists a sublinear function $f$, such that every cyclically 4-edge-connected cubic graph $G$ admits a proper 5-edge-coloring $c$ with $|N_G(c)|\leq f(|V(G)|)$.
\end{enumerate}
\end{theorem}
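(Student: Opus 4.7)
Direction (a) $\Rightarrow$ (b) is immediate since the constant function equal to $9$ is sublinear. For (b) $\Rightarrow$ (a), the plan is to adapt the sublinearity-plus-pigeonhole template of Theorems~\ref{thm:bridgeless}--\ref{thm:3edgeconnected}, using the cyclic-join construction from the paragraph preceding Proposition~\ref{prop:tcyc4edgeconnected}. Given a cyclically 4-edge-connected cubic graph $G$, I would pick a path of length three in $G$ and let its two end-edges be the independent edges $e_1 = ab$ and $e_2 = cd$, so that the middle edge of the path is $bc$. For $t \geq 2$, build $H$ by joining $t$ disjoint copies of $G - e_1 - e_2$ cyclically; by Proposition~\ref{prop:tcyc4edgeconnected}, $H$ is cyclically 4-edge-connected, and $|V(H)| = t\cdot |V(G)|$.

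Choosing $t$ large enough that $f(|V(H)|)/|V(H)| < 1/|V(G)|$, hypothesis (b) yields a proper 5-edge-coloring $c$ of $H$ with
\[|N_H(c)| \leq f(|V(H)|) < \frac{|V(H)|}{|V(G)|} = t,\]
so by pigeonhole at least one of the $t$ copies of $G - e_1 - e_2$ sitting inside $H$, call it $K$, contains no abnormal edge of $c$. From $K$ I would build a 5-edge-coloring $c'$ of $G$ by keeping the colors of $c$ on every edge of $K$ and then extending to $e_1$ and $e_2$: since each of $a, b, c, d$ has degree two inside $K$, at most four colors are forbidden at each of $e_1, e_2$, and a proper extension to a 5-edge-coloring of $G$ always exists.

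Any edge of $G$ whose endpoints both lie in $V(K)\setminus\{a,b,c,d\}$ sees exactly the same color palette at each endpoint under $c'$ as it did under $c$ in $H$, hence remains normal because $K$ was chosen free of abnormal edges. The potentially abnormal edges are therefore contained in the set of edges of $G$ incident to $\{a,b,c,d\}$: three at $a$, plus $bc$ and one further edge at $b$, plus $e_2$ and one further edge at $c$, plus the two further edges at $d$, for a total of $3+2+2+2 = 9$. This gives $|N_G(c')| \leq 9$, which is (a). The main obstacle in this plan is Proposition~\ref{prop:tcyc4edgeconnected} itself, which the paper deliberately sets aside as a lengthy case analysis; apart from that, the only new arithmetic compared to Theorems~\ref{thm:2edgeconnected} and \ref{thm:3edgeconnected} is the bookkeeping of boundary-incident edges, whose count rises to $9$ precisely because two independent edges (rather than one edge or one vertex) must be deleted in order to set up a cyclic join preserving cyclic 4-edge-connectivity.
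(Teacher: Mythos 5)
Your proposal is correct and follows essentially the same route as the paper: the same cyclic-join construction on $G-e_1-e_2$ with $e_1,e_2$ the end-edges of a path of length three, the same appeal to Proposition~\ref{prop:tcyc4edgeconnected} and the sublinearity-plus-pigeonhole argument, and the same extension of the coloring via the missing colors at the four endpoints. Your count of at most nine potentially abnormal edges (all edges incident to $\{a,b,c,d\}$) matches the paper's bound exactly.
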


\begin{proof} (a) implies (b): as before it follows by Proposition \ref{prop:JaegerNormalColor} and since the identically nine function is sublinear.

(b) implies (a): let $G$ be any cyclically 4-edge-connected cubic graph. Take a path of length three in $G$ and let $e_1$, $e_2$ be the end-edges of this path. Now for a positive integer $t\geq 2$, take $t$ copies of $G-e_1-e_2$ and join them cyclically. Let $H$ be the resulting cubic graph. Clearly, $|V(H)|=t\cdot |V(G)|$ and $H$ is cyclically 4-edge-connected (Proposition \ref{prop:tcyc4edgeconnected}). Since $f$ is sublinear, we can choose $t$ such that
\[\frac{f(|V(H)|)}{|V(H)|}<\frac{1}{|V(G)|}.\]
By (b), $H$ admits a proper 5-edge-coloring $c$, such that $|N_H(c)|\leq f(|V(H)|)$. We have:
\[|N_H(c)|\leq f(|V(H)|)<\frac{|V(H)|}{|V(G)|}=t.\]
Thus, there is a copy of $G-e_1-e_2$, such that it contains no abnormal edges with respect to $c$. Now, consider a 5-edge-coloring of $G$ obtained from $c$ by taking the colors of $e_1$ and $e_2$ as the ones missing in its end-points. Since the graph is cubic, there are four such edges. Hence we can choose one from $\{1,...,5\}$. The resulting coloring is proper. Moreover, only the edges $e_1$ and $e_2$ and the edges adjacent to them might be abnormal. Since by our choice $e_1$ and $e_2$ are adjacent to one edge, we have that $G$ contains at most nine abnormal edges with respect to this coloring. The proof is complete.
\end{proof}

\section{Related problems}\label{sec:future_work}

In this paper, we considered the problem stated in \cite{Samal2011}, which can be re-phrased as finding upper bounds for abnormal edges in 5-edge-colorings of bridgeless cubic graphs. Our main goal was to investigate the case of this problem when the upper bound for abnormal edges can be written as a sublinear function of the size of the graph. We presented a conjecture which was implying that obtaining such bounds is going to be as hard as proving the Petersen coloring conjecture. In order to support our conjecture we obtained several results that were showing that sublinear bounds imply constant bounds for abnormal edges. As a problem for future research, one could consider the problem of decreasing these constants. As an intermediate step towards this goal, we would like to offer
\begin{question}
\label{que:Giuseppe} Assume that a bridgeless cubic graph $G$ admits a proper 5-edge-coloring $c$, such that $|N_G(c)|\leq 2$. Can we prove that $G$ admits a normal 5-edge-coloring?
\end{question}

%NEW PART FROM HERE

We put constant $2$ in Question \ref{que:Giuseppe} because a cubic graph $G$ with a proper $5$-edge-coloring $c$ is such that $|N_G(c)|\ne 1$. The following result shows this fact.

\begin{figure}
	\centering
	\includegraphics[scale=0.7]{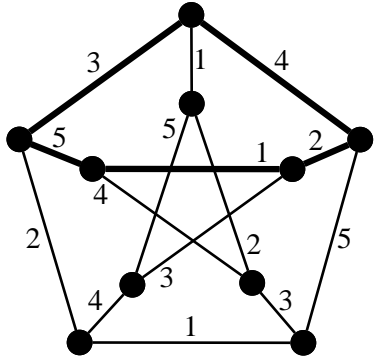}
	\caption{The edge-coloring $\tilde{c}$ of $P.$}\label{Fig:5_edge_coloring_Petersen}
\end{figure}

\begin{proposition}
	There is no cubic graph having a proper $5$-edge-coloring with exactly one abnormal edge.
\end{proposition}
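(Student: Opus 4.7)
The plan is to derive a contradiction via a parity count of vertex palettes. For each vertex $w$ of $G$ let $P(w)\subseteq\{1,\dots,5\}$ denote its palette (the set of colors at the three edges of $w$), and for each $3$-subset $T\subseteq\{1,\dots,5\}$ write $n_T:=|\{w:P(w)=T\}|$. Suppose for contradiction that a cubic graph $G$ admits a proper $5$-edge-coloring $c$ whose unique abnormal edge is $e^*=uv$; after permuting colors I may assume $c(e^*)=1$, $P(u)=\{1,2,3\}$ and $P(v)=\{1,2,4\}$. For each pair $\{i,j\}$ of distinct colors consider $G_{ij}$, the spanning subgraph formed by the edges of colors $i$ and $j$; its components are paths and cycles, a path endpoint is any $w$ with $|P(w)\cap\{i,j\}|=1$, and to every such endpoint I attach its \emph{fixed part} $F(w):=P(w)\setminus\{i,j\}$, a $2$-subset of $\{1,\dots,5\}\setminus\{i,j\}$.

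The first step would be a palette-evolution lemma: along any $G_{ij}$-path whose edges are all poor or rich, the internal vertices share a common palette $\{i,j,z\}$ and the two endpoints share the same fixed part $F$ (with $z$ the unique element of $\{1,\dots,5\}\setminus(F\cup\{i,j\})$). This is a direct recursion: a poor edge keeps the palette; from an endpoint $F\cup\{i\}$ a rich edge of color $i$ sends the palette to $\{i,j,z\}$; and from $\{i,j,z\}$ a poor edge stays there while a rich edge of color $i$ or $j$ exits to $F\cup\{i\}$ or $F\cup\{j\}$, producing the other endpoint with the same $F$. Writing $S^F_{ij}$ for the set of $G_{ij}$-path endpoints with fixed part $F$, the lemma gives $|S^F_{ij}|=n_{F\cup\{i\}}+n_{F\cup\{j\}}$ and shows that every normal $G_{ij}$-path contributes $2$ elements to a single $S^F_{ij}$. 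Since $e^*$ is the only abnormal edge of $G$, for each of the nine pairs $\{i,j\}\subseteq\{2,3,4,5\}$ every $G_{ij}$-path is normal, hence $n_{F\cup\{i\}}+n_{F\cup\{j\}}\equiv 0\pmod 2$ for every admissible $F$.

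For each of the four remaining pairs $\{1,j\}$, exactly one $G_{1j}$-component contains $e^*$, and I would trace this component by the same palette evolution, after first observing that it is a path rather than a cycle: for $G_{13},G_{14},G_{15}$ this is automatic because $u$ or $v$ is itself a path endpoint there, and for $G_{12}$ it follows because closing the component into a cycle would require a non-$e^*$ edge on the side opposite $e^*$ to also be abnormal. The trace then identifies the two fixed parts paired across the abnormal edge: $G_{15}$ pairs $\{2,3\}\leftrightarrow\{2,4\}$ (the component is $e^*$ itself), $G_{13}$ pairs $\{2,4\}\leftrightarrow\{4,5\}$ (from $u$ onwards the internal palette is $\{1,2,3\}$, forcing fixed part $\{4,5\}$), $G_{14}$ symmetrically pairs $\{2,3\}\leftrightarrow\{3,5\}$, and $G_{12}$ pairs $\{4,5\}\leftrightarrow\{3,5\}$ (with $u$- and $v$-side internal palettes $\{1,2,3\}$ and $\{1,2,4\}$). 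These eight give \emph{odd} parity constraints, while all other $F$'s in the four $G_{1j}$'s yield even ones. Finally, chaining the even constraints $n_{\{1,2,3\}}\equiv n_{\{1,2,4\}}\equiv n_{\{1,3,4\}}\equiv n_{\{2,3,4\}}\equiv n_{\{2,3,5\}}\pmod 2$ (coming from $G_{34},G_{23},G_{12},G_{45}$ respectively) contradicts the odd constraint $n_{\{1,2,3\}}+n_{\{2,3,5\}}\equiv 1\pmod 2$ from $G_{15}$, proving the proposition. I expect the main technical obstacle to be precisely this palette-tracking in the four $G_{1j}$ cases: at each step along the trace one must rule out an abnormal next edge and, in $G_{12}$, certify that the component is a path rather than a cycle.
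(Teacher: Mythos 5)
Your argument is correct, and it takes a genuinely different route from the paper's. The paper's proof maps $E(G-e)$ into the Petersen graph $P$ equipped with a fixed reference $5$-edge-coloring $\tilde c$ (each poor or rich edge of $G-e$ goes to the unique edge of $P$ with matching color and matching vertex palettes), pulls back a single cycle $C$ of $P$, and observes that the resulting subgraph of $G-e$ has every vertex of degree $2$ except exactly one vertex of degree $1$ --- impossible by the handshake lemma. You avoid the Petersen graph entirely: your palette-evolution lemma along the Kempe chains $G_{ij}$ holds exactly as sketched (a poor edge preserves the palette; a rich edge of color $i$ toggles between $F\cup\{i\}$ and $\{i,j,z\}$, so both endpoints of a normal path carry the same fixed part $F$), and the resulting system of mod-$2$ congruences on the palette counts $n_T$ does close into a contradiction: the chain $n_{\{1,2,3\}}\equiv n_{\{1,2,4\}}\equiv n_{\{1,3,4\}}\equiv n_{\{2,3,4\}}\equiv n_{\{2,3,5\}}$ is legitimate (in particular the $G_{12}$ link with $F=\{3,4\}$ is unaffected by the abnormal component, whose endpoints have fixed parts $\{4,5\}$ and $\{3,5\}$), and it collides with the odd constraint $n_{\{1,2,3\}}+n_{\{2,3,5\}}\equiv 1$ from $G_{15}$, where the abnormal component is the single edge $e^*$. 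Your treatment of the four special components is also sound, including the exclusion of a $G_{12}$-cycle through $e^*$ (all edges on the far side would have to be poor with palette $\{1,2,3\}$, forcing a second abnormal edge at $v$). Both proofs are at heart parity arguments; the paper's is more compact, needing only one pulled-back cycle and one application of handshaking, while yours is self-contained, requires no auxiliary coloring of $P$, and yields a whole system of congruences that constrains which palette pairs an abnormal edge can join --- potentially reusable for the paper's question about colorings with at most two abnormal edges. One trivial slip: there are six, not nine, pairs $\{i,j\}\subseteq\{2,3,4,5\}$ (only three of which you actually need for the chain).
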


\begin{proof}
	Let $G$ be a cubic graph and $c$ a proper $5$-edge-coloring with the property that exactly one edge $e=uv\in E(G)$ is abnormal. We can assume without loss of generality that $c(e)=1$ and that $S_{G,c}(u)=\{1,2,3\}$, $S_{G,c}(v)=\{1,2,4\}$. Let
	$ P$ be the Petersen graph with the edge-coloring $\tilde{c}$ depicted in Figure \ref{Fig:5_edge_coloring_Petersen}. We denote $\phi_c \colon E(G-e)\to E(P)$ the map such that, for all edges $xy\in E(G-e)$, \begin{itemize}
		\item if $xy$ is poor in $G$, then $\phi_c(xy)=x'y'$ is the unique edge of $P$ such that $c(xy)=\tilde{c}(x'y')$ and $S_{P,\tilde{c}}(x')= S_{G,c}(x) = S_{G,c}(y)$;
		\item if $xy$ is rich in $G$, then $\phi_c(xy)=x'y'$ is the unique edge of $ P$ such that $c(xy)=\tilde{c}(x'y')$, $S_{P,\tilde{c}}(x')=S_{G,c}(x)$ and $S_{P,\tilde{c}}(y') = S_{G,c}(y).$
	\end{itemize}

	Let $C\subset P$ be the bold cycle in Figure \ref{Fig:5_edge_coloring_Petersen} and consider the subgraph $H$ of $G-e$ induced by $\phi^{-1}_c(C)$. First notice that $\phi^{-1}_c(C)\cap (\partial_G(u)\cup \partial_G(v))$ consists of just one edge, that is the edge $wu$, with $w\ne v$, such that $c(wu)=2$. Therefore $H$ has one vertex of degree $1$. On the other hand, since the map $\phi_c$ sends each edge $x_1x_2$ of $G-e$ to an edge $y_1y_2$ of $P$ such that $S_{G,c}(x_i)=S_{P,\tilde{c}}(y_i)$, $i\in\{1,2\}$, we have that every vertex of $V(H)\setminus \{u\}$ has degree $2$ in $H$. Therefore $H$ is a graph with all vertices of degree $2$ and only one vertex of degree $1$, that is impossible.
\end{proof}

On the other hand, we show that, for every integer $k\ge 2$, there is a cubic graph with a proper $5$-edge-coloring having exactly $k$ abnormal edges. Before going to the proof, let $G_1,G_2$ be two cubic graphs and let $e_i=x_iy_i\in E(G_i)$, with $i\in\{1,2\}$. We define the \emph{$2$-cut connection} applied to the graphs $G_1,G_2$ with prescribed edges $e_1,e_2$ respectively, to be the operation that produces the cubic graph $K$ having vertex-set $V(K)=V(G_1)\cup V(G_2)$ and edge-set $E(K)=(E(G_1)\cup E(G_2) \cup \{x_1x_2,y_1y_2\}) \setminus \{e_1,e_2\}$.

\begin{figure}
	\centering
	\includegraphics[scale=0.6]{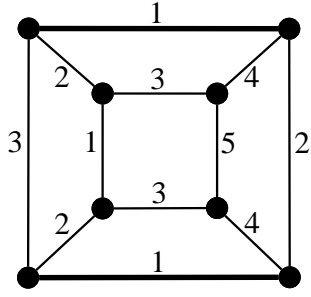}
	\caption{A proper $5$-edge-coloring of the $3$-cube with two abnormal edges (bold edges).}\label{Fig:cube}
\end{figure}

\begin{proposition}
	For all integers $k\ge2$, there is a cubic graph having a proper $5$-edge-coloring with exactly $k$ abnormal edges.
\end{proposition}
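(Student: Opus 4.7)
The plan is to construct, for every integer $k\ge 2$, an explicit cubic graph $G_k$ endowed with a proper $5$-edge-coloring whose abnormal edge set has size exactly $k$. The base case $k=2$ is supplied directly by Figure~\ref{Fig:cube}; write $Q$ for this $3$-cube equipped with its $2$-abnormal coloring. The inductive step would use the $2$-cut connection just defined, applied to the current graph and a fresh copy of $Q$.

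The main lemma I would establish is a ``$+2$-step''. If $G$ is a cubic graph with a proper $5$-edge-coloring having $m$ abnormal edges and contains a poor edge $e=uv$ with $c(e)=\alpha$ and $S_c(u)=S_c(v)=T$, and if $Q$ contains a poor edge $e'=u'v'$ with the same colour $\alpha$ and the same $3$-set $T$ at its endpoints, then $2$-cut connecting $G$ and $Q$ along $e$ and $e'$ and painting both new edges with colour $\alpha$ produces a cubic graph with a proper $5$-edge-coloring having exactly $m+2$ abnormal edges. The verification is immediate: the $S$-value at every vertex of $G-e$ and of $Q-e'$ is unchanged, so every old edge retains its status, and the two new edges $x_1x_2$ and $y_1y_2$ are poor because both endpoints carry colour-set $T$. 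Starting from $G_2=Q$ and iterating the lemma therefore yields a cubic graph $G_{2n}$ for every $n\ge 1$, which handles all even values of $k$.

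To handle odd $k$, I would first exhibit a separate base case $G_3$ with exactly three abnormal edges; the same matching-colour $+2$-step then produces $G_{2n+3}$ for all $n\ge 0$, covering every odd $k\ge 3$. Building $G_3$ is the main obstacle: the matching-colour construction above always adds an even number of abnormal edges, so it cannot produce $G_3$ starting from $G_2$. The strategy I would pursue is to perform a single ``twisted'' $2$-cut connection between $Q$ and a $3$-edge-colorable cubic graph (which carries $0$ abnormal edges) along a pair of poor edges whose colours or whose $S$-values at the endpoints do not match. This forces the $S$-value at one or more of the four seam vertices to change, and a finite local case analysis records how each near-seam edge flips status (poor$\leftrightarrow$abnormal, rich$\leftrightarrow$abnormal, or no change).

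The hard part will be choosing the seam so that the net effect of all these flips, together with the status of the two new edges $x_1x_2$ and $y_1y_2$, is exactly $+1$: for instance, arranging that on one side of the
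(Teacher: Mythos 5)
Your ``$+2$-step'' is sound: a $2$-cut connection along two poor edges with matching colour $\alpha$ and matching endpoint colour-set $T$ preserves $S_c$ at all four seam vertices, hence preserves the status of every old edge, and makes both new edges poor. (You still owe the reader a check that the $3$-cube coloring of Figure~\ref{Fig:cube} contains a poor edge, but after permuting the five colours of the attached copy that is all you need to realise any prescribed pair $(\alpha,T)$.) So the even case is within reach. The genuine gap is the odd case: your construction can only change the number of abnormal edges by an even amount, you explicitly identify the graph $G_3$ as ``the main obstacle,'' and the argument for it is only a strategy --- an unperformed local case analysis at a deliberately mismatched seam --- which moreover breaks off mid-sentence. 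Nothing in the proposal exhibits a seam whose net effect is $+1$, so the statement is not established for any odd $k\ge 3$.

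The paper sidesteps the parity problem by performing the surgery at an \emph{abnormal} edge rather than at a poor one. Given a proper $5$-edge-coloring with $k-1$ abnormal edges, it picks an abnormal edge $e=uv$, normalises so that $c(e)=1$ and $S_c(u)\cup S_c(v)=\{1,2,3,4\}$, and applies a $2$-cut connection with a copy of $K_4$ coloured as in Figure~\ref{Fig:diamond}; the colouring extends properly and acquires exactly one additional abnormal edge. Iterating from the $k=2$ base case then yields every $k\ge 3$ in unit steps, with no parity split. If you want to complete your route, the missing ingredient is exactly such a ``$+1$'' gadget, and the natural place to look for it is at an abnormal edge of the current graph (whose local colour data $\{1,2,3,4\}$ is rigid enough to control the case analysis), not at a mismatched pair of poor edges.
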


\begin{proof}
	Figure \ref{Fig:cube} shows that the statement is true for $k=2$.
	Let $k\ge3$ and $G$ be a cubic graph with a proper $5$-edge-coloring $c$ with $k-1$ abnormal edges. Call $e=uv$ one of the abnormal edges of $G$. Up to a permutation of colors, we can assume that $c(e)=1$ and that $S_{G,c}(u)\cup S_{G,c}(v)=\{1,2,3,4\}$. Construct the cubic graph $H$ applying a $2$-cut connection to $G$ (with prescribed edge $e$) and a copy of $K_4$ (see Figure \ref{Fig:diamond}). The edge-coloring of $G$ can be extended to the new added part as shown in Figure \ref{Fig:diamond}. This new edge-coloring has one more abnormal edge than $c$. Therefore $H$ admits a proper $5$-edge-coloring with $k$ abnormal edges.
\end{proof}

\begin{figure}
	\centering
	\includegraphics[scale=0.6]{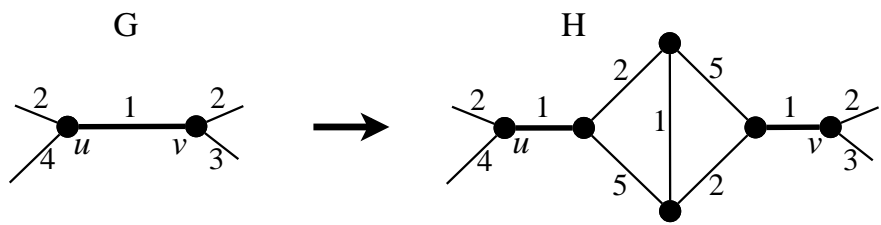}
	\caption{Extendig an edge-coloring in such a way that the resulting one has one more abnormal edge (abnormal edges are depicted in bold).}\label{Fig:diamond}
\end{figure}

%% References
%%
%% Following citation commands can be used in the body text:
%% Usage of \cite is as follows:
%%   \cite{key}         ==>>  [#]
%%   \cite[chap. 2]{key} ==>> [#, chap. 2]
%%

%% References with bibTeX database:

\bibliographystyle{elsarticle-num}
% \bibliographystyle{elsarticle-harv}
% \bibliographystyle{elsarticle-num-names}
% \bibliographystyle{model1a-num-names}
% \bibliographystyle{model1b-num-names}
% \bibliographystyle{model1c-num-names}
% \bibliographystyle{model1-num-names}
% \bibliographystyle{model2-names}
% \bibliographystyle{model3a-num-names}
% \bibliographystyle{model3-num-names}
% \bibliographystyle{model4-names}
% \bibliographystyle{model5-names}
% \bibliographystyle{model6-num-names}

%\bibliography{sample}

\end{document}